\newtheorem{theorem}{$\mathbf{Theorem}$}
\newtheorem{lemma}{$\mathbf{Lemma}$}
\begin{document}

\title{Joint Power Splitting and Antenna Selection in Energy Harvesting Relay Channels}
\author{Zheng~Zhou, Mugen~Peng,~\IEEEmembership{Senior~Member,~IEEE}, Zhongyuan~Zhao, and Yong Li,~\IEEEmembership{Member,~IEEE}
\thanks{Copyright (c) 2012 IEEE. Personal use of this material is permitted. However, permission to use this material for any other purposes must be obtained from the IEEE by sending a request to pubs-permissions@ieee.org. Corresponding Author: Mugen Peng.}
\thanks{Zheng~Zhou (e-mail: {\tt nczhouzheng@gmail.com}), Mugen~Peng (e-mail: {\tt pmg@bupt.edu.cn}), Zhongyuan~Zhao (e-mail: {\tt zyzhao@bupt.edu.cn}), and Yong Li (e-mail: {\tt Liyong@bupt.edu.cn}) are with the Key Laboratory of Universal Wireless Communications for Ministry of Education, Beijing University of Posts and Telecommunications, China.}
\thanks{This work was supported in part by the National Natural Science Foundation of China (Grant No. 61222103), National Basic Research Program of China (973 Program) (Grant No. 2013CB336600), Beijing Natural Science Foundation (Grant No. 4131003), and Specialized Research Fund for the Doctoral Program of Higher Education (SRFDP) (Grant No. 20120005140002).}
}

\maketitle

\begin{abstract}

The simultaneous wireless transfer of information and power with the
help of a relay equipped with multiple antennas is considered in this
letter, where a ``harvest-and-forward" strategy is proposed. In
particular, the relay harvests energy and obtains information from
the source with the radio-frequent signals by jointly using the
antenna selection (AS) and power splitting (PS) techniques, and then
the processed information is amplified and forwarded to the
destination relying on the harvested energy. This letter jointly
optimizes AS and PS to maximize the achievable rate for the proposed
strategy. Considering the joint optimization is according to the
non-convex problem, a two-stage procedure is proposed to determine
the optimal ratio of received signal power split for energy
harvesting, and the optimized antenna set engaged in information
forwarding. Simulation results confirm the accuracy of the two-stage
procedure, and demonstrate that the proposed ``harvest-and-forward"
strategy outperforms the conventional amplify-and-forward (AF)
relaying and the direct transmission.
\end{abstract}

\begin{keywords}
Energy harvesting, antenna selection, power splitting.
\end{keywords}

\newpage

\vspace*{-0.5em}
\section{Introduction}
\vspace*{-0.5em}

Energy harvesting is a promising solution to increase the life cycle
of wireless nodes and hence alleviates the energy bottleneck of
green wireless networks. As an alternative to conventional energy
harvesting techniques, simultaneous wireless information and power
transfer~(SWIPT), which relies on the usage of radio frequency (RF) signals, is expected to bring some fundamental changes to the design of wireless communication networks~\cite{b10}. Considering that a
wireless relay is not always convenient to be equipped with fixed power supplies, the energy harvesting relay with SWIPT has been presented
recently, where power splitting (PS) and time switching (TS) are two
advanced protocols~\cite{b13}. PS splits the received signal power
at the relay into two parts, one is for information processing, and
the other is for energy harvesting to power the forwarding of the processed information. While the relay utilizes different time blocks to realize these two operations separately in the TS
protocol. Further on, thanks to the spatial processing in wireless nodes with multiple
antennas, the TS protocol has been extended and the information
processing and energy harvesting can be separated at different antennas over the same time, and antennas switch between decoding and rectifying
based on a antenna selection (AS) scheme~\cite{b18}.

Intuitively, in the multiple-antenna scenario, the joint PS and AS
design can reach a flexible utilization of the received RF signals,
which provides better performances than the separated PS or AS does.
Through AS, partial antennas are selected out only for energy
harvesting, and the remaining antennas are specified for both
information processing and energy harvesting, which can be optimized
further by PS. Unfortunately, few works have discussed the joint PS
and AS design for the relay with multiple antennas. In this letter,
a ``harvest-and-forward" strategy is proposed to improve the
achievable rate in energy harvesting relay channels with multiple
antenna configurations. Further, the achievable rate maximization
problem through the joint AS and PS optimization is formulated and
derived.

\vspace*{-0.5em}
\section{System Model}

An energy harvesting relay channel consisting of a single-antenna
source $S$, a single-antenna destination $D$, and a multiple-antenna
relay $R$, is depicted in Fig. \ref{fig1}(a). Both $S$ and $D$ are
devices without energy harvesting and have continuous supply of
power. $R$ is an energy harvesting device and relies on its
harvested energy to participate in signal transmission and
processing. To complete the information delivering from $S$ to $D$
with the help of $R$, a two-phase ``harvest-and-forward" strategy is
presented in this letter. Specifically, in the first phase, $R$
receives a signal from $S$, and harvests energy from a part of its
signal power. In the second phase, relying on the harvested energy,
$R$ amplifies and forwards remnant signals to $D$. For simplicity, the transmission duration of each phase is set to be normalized, thus the terms ``energy" and ``power" can be used equivalently~\cite{b12}. Besides, the power consumption for the signal receiving is assumed to be
negligible, and the harvested energy at $R$ is only used for signal forwarding~\cite{b12}.
\begin{figure}[!htb]
\centering\vspace*{-1em}
\includegraphics[width = 5 in]{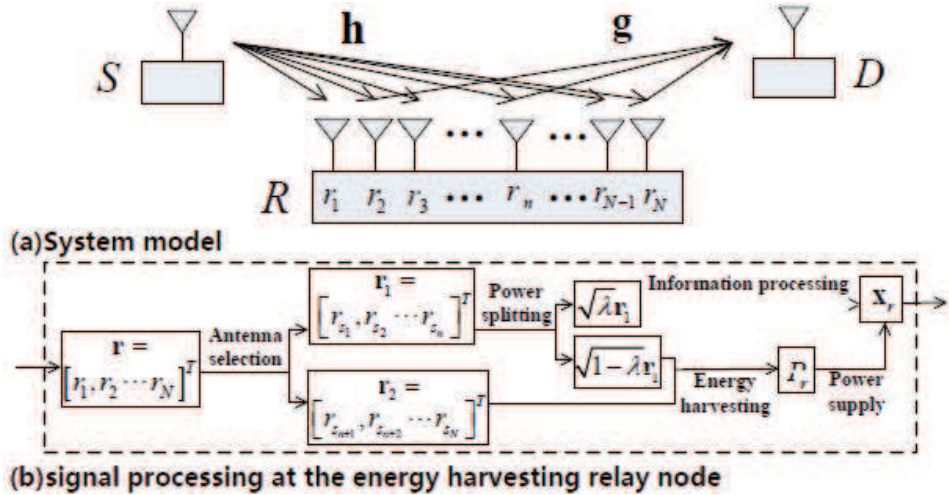}
\caption{System model and signal processing at the energy harvesting relay}\label{fig1}\vspace*{-1em}
\end{figure}

In the first phase, the received signal at $R$ with $N$ antennas can be denoted by
\begin{equation}\label{eq:1}
{\bf{r}} = \sqrt {P_s} {\bf{h}}s + {{\bf{z}}_a},
\end{equation}
where ${\bf{r}} = {\left[ {{r_{1}},{r_{2}} \cdots {r_{N}}}
\right]^T}$ is an $N \times 1$ vector, $s$ denotes the transmitted
signal from $S$ with normalized power $E({\left| s \right|^2}) = 1$,
and $P_s$ is the transmit power at $S$ with $P_s \le P$. Besides, ${\bf{h}}= {\left[ {{h_{1}},{h_{2}} \cdots {h_{N}}}
\right]^T}$ denotes
the $N \times 1$ channel vector between $S$ and $R$, and
${{\bf{z}}_a}\sim CN\left( {{\bf{0}},\sigma _a^2{{\bf{I}}_N}}
\right)$ is the $N \times 1$ additive noise vector introduced by the receiver antennas at $R$~\cite{b13}.

For the realization of the harvest-and-forward strategy, $N$ antennas are divided
into two sets via the AS technique, thus the received signal
$\bf{r}$ is split into two sub-signals (i.e., ${\bf{r_1}}$ and
${\bf{r_2}}$). The components in the first antenna set are used to
forward signals and harvest energy via the PS technique. The
ratio of sub-signal power split for the information processing is
denoted as $\lambda\in [0,1]$, and the energy harvesting is denoted
as $(1-\lambda)$. The components in the second antenna set are used to harvest energy solely. Consequently, the harvested energy from
$\bf{r}$ can be calculated as
\begin{equation}\label{eq:2}
{P_r} =\eta \left( {1 - \lambda } \right)\left\| {\bf{r_1}} \right\|_2^2 + \eta \left\| {\bf{r_2}} \right\|_2^2= \eta \left( {1 - \lambda } \right)\sum\limits_{i = 1}^n {{{\left|
{{r_{{s _i}}}} \right|}^2}}+ \eta \sum\limits_{i = n + 1}^N
{{{\left| {{r_{{s _i}}}} \right|}^2}},
\end{equation}
where $\eta \in (0,1]$ denotes the energy conversion efficiency from
signal power to circuit power, and ${\bf{r_1}} ={\left[
{{r_{s_{1}}},{r_{s_{2}}} \cdots {r_{s_{n}}}} \right]^T}$ presents
the sub-signal received by the first antenna set, where $n \in
[1,N]$ is the number of antennas therein, and ${s _1},{s _2} \cdots
{s _n}$ are labels of them.
${\bf{r_2}}={\left[{{r_{s_{n+1}}},{r_{s_{n+2}}} \cdots {r_{s_{N}}}}
\right]^T}$ describes the sub-signal received by the second
antenna set, and ${s _{n+1}},{s _{n+2}} \cdots {s _N}$ are labels of
antennas therein. Note that these variables and vector sets~(i.e, $n$, $\lambda$, and ${\bf{r_1}}$, ${\bf{r_2}}$) should be optimized for performance improvement. What's more, $P_r$ is limited, since $P_r \le \eta \left\| {\bf{r}} \right\|_2^2 \le \eta P_s$.

During the second phase, the remnant signals will be amplified and forwarded to $D$, which is powered by the harvested energy in the first phase. Note that a distributed beamforming design is adopted at $R$ as in~\cite{b20}, since the joint optimization of a centralized beamforming, PS and AS is too complex to achieve a tractable solution. In this way, we only focus on the joint optimization of PS and AS in this letter. The processed signal at $R$ is formulated as
\begin{equation}\label{eq:3}
{\bf{x_r}} ={\bf{e}}^{{\bf{j\theta }}} \beta (\sqrt{\lambda}
\bf{r_1}+ {\bf{z}}_{b} ),
\end{equation}
where ${\bf{x_r}}$, ${\bf{z}}_b$ and $\sqrt{\lambda} \bf{r_1}$ are $n \times 1$ vectors, ${\bf{z}}_b\sim CN\left( {{\bf{0}},\sigma _b^2{{\bf{I}}_n}}
\right)$ is the additive noise vector
introduced by signal conversion from passband to baseband~\cite{b13}. The harvested energy allocation is based on the strength of remnant signals at each antenna. Thus, the relay amplification gain is depicted by
\begin{equation}\label{eq:4}
\beta = \sqrt {\frac{{{P_r}}}{{\lambda \sum\limits_{i = 1}^n {\left|
{{r_{{s _i}}}} \right|}^2 + n \sigma_b^2}}}.
\end{equation}

Besides, ${{\bf{e}}^{{\bf{j\theta }}}}$ is the $n \times n$
distributed beamforming diagonal matrix~\cite{b20}, which has
${e^{j{\theta _i}}}, (i=1,2 \cdots n)$'s on the main diagonal and
zeros elsewhere, and ${\theta _{{i}}} = - \left( {\arg {h_{{s_i}}} +
\arg {g_{{s_i}}}} \right)$. The above processes are illustrated in Fig.~\ref{fig1}(b). The received signal at $D$ is expressed
as
\begin{equation}\label{eq:5}
y = {\bf{g}} {\bf{x_r}} + z = \beta \sqrt \lambda  \sqrt {P_s} \sum\limits_{i = 1}^n {\left| {{g_{{s _i}}}{h_{{s _i}}}} \right|} s + \beta {{\bf{g}}}{{\bf{e}}^{{\bf{j\theta }}}}\left( {\sqrt \lambda  {{\bf{z}}_a}' + {{\bf{z}}_b}} \right) +
 z,
\end{equation}
where ${\bf{g}}^T=[g_{s_1}, g_{s_2} \dots g_{s_n}]^T$ is the $n \times
1$ channel vector between these forwarding antennas and $D$,
${{\bf{z}}_a}'$ is the noise vector from these $n$ antennas comprising of $s_1$-th, $s_2$-th $\dots$ $s_n$-th items from ${{\bf{z}}_{a}}$, and $z\sim CN\left( {0,{\sigma ^2}} \right)$ is the Gaussian noise at $D$.

Accordingly, the received signal-to-noise-ratio (SNR) at $D$ can be
given by
\begin{equation}\label{eq:6}
\begin{split}
&SNR = \frac{{{\beta ^2}\lambda P_s{{\left( {\sum\limits_{i = 1}^n
{\left| {{g_{{s _i}}}{h_{{s _i}}}} \right|} } \right)}^2}}}{{{\beta
^2}\sum\limits_{i = 1}^n {{{\left| {{g_{{s _i}}}} \right|}^2}}
\left( {\lambda \sigma _a^2 + \sigma _b^2} \right) + {\sigma ^2}}}.
\end{split}
\end{equation}

Consequently, an optimization problem is formulated for the proposed
harvest-and-forward strategy to maximize the achievable rate:
\begin{equation}\label{eq:7}
\begin{array}{l}
(P1): \mathop {\max }\limits_{\lambda ,n,{s _1} \cdots {s _n}} \quad R = W\log_2(1+SNR),\\
s.t.\quad  \quad 0 \le \lambda  \le 1,\\
\quad \quad \quad {P_s} \le P,\\
\quad \quad \quad n,i,j,s_i,s_j \in \left[ {1:N} \right],
\end{array}
\end{equation}
where $W$ is the channel bandwidth, $n,{s _1} \cdots {s _n}$ are
variables determined by the AS scheme, and $\lambda$ is the variable
determined by the PS scheme. Note that ${P_s} = P$ must be satisfied for the achievable rate maximization.

\vspace*{-0.5em}
\section{Achievable Rate Optimization}

Note that Problem (P1) is equivalent to the SNR maximization problem~\cite{b20}. Since the above two kinds of variables are coupled together for calculating SNR in \eqref{eq:6}, the SNR maximization problem is non-linear and non-convex. To make this problem tractable, a general two-stage optimization procedure is proposed, where the antenna set is fixed firstly to determine the optimal $\lambda$, followed by the optimal antenna selection configuration.

\vspace*{-0.5em}
\subsection{Power Splitting and Optimal $\lambda$}

According to \eqref{eq:4}, \eqref{eq:6} and \eqref{eq:7}, the corresponding SNR
maximization problem with a given antenna set is reformulated as
\begin{equation}\label{eq:8}
\begin{split}
(P2): &\mathop {\max }\limits_\lambda \quad J_{\lambda}=\frac{{\eta P{B_{{\Omega _n}}}\lambda \left( {{R_N} - {R_{{\Omega _n}}}\lambda } \right)}}{{\eta {A_{{\Omega _n}}}\left( {\lambda \sigma _a^2 + \sigma _b^2} \right)\left( {{R_N} - {R_{{\Omega _n}}}\lambda } \right) + {\sigma ^2}\left( {n\sigma _b^2 + {R_{{\Omega _n}}}\lambda } \right)}},\\
&s.t.\quad \quad 0 \le \lambda \le 1,
\end{split}
\end{equation}
where ${\Omega _n}=[s_1, s_2 \dots s_n]$ denotes the set of $n$ forwarding antennas, ${A_{\Omega _n}} = \sum\limits_{i
= 1}^n {{{\left| {{g_{{s_i}}}} \right|}^2}}$, ${B_{\Omega _n}} =
{\left( {\sum\limits_{i = 1}^n {\left|{{g_{{s_i}}}{h_{{s_i}}}}
\right|} } \right)^2}$, ${R_{\Omega _n}} = \sum\limits_{i = 1}^n
{{{\left| {{r_{{s_i}}}} \right|}^2}}$, and ${R_N} = \sum\limits_{i =
1}^N {{{\left| {{r_{{s_i}}}} \right|}^2}}$.

\begin{lemma}
With a given antenna set ${\Omega _n}$, the received SNR $J_{\lambda}$ is a concave function in terms of the power splitting ratio $\lambda$.
\end{lemma}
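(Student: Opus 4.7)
The plan is to establish concavity by rewriting $J_\lambda$ in a form where the only nonlinear dependence on $\lambda$ lives in a single rational function, and then to verify convexity of that rational function by a partial-fraction decomposition.

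First I would denote the numerator and denominator of $J_\lambda$ by $N(\lambda)$ and $D(\lambda)$; both are at most quadratic in $\lambda$. Since $R_N\ge R_{\Omega_n}$, the factor $R_N - R_{\Omega_n}\lambda$ is non-negative on $[0,1]$, and combining this with $\sigma^{2}n\sigma_{b}^{2}>0$ shows $D(\lambda)>0$ throughout $[0,1]$, so $J_\lambda$ is smooth there.

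Next, a short algebraic check reveals that the $\sigma_{a}^{2}$-part of $D$ is proportional to $N$: setting $\mu := A_{\Omega_n}\sigma_{a}^{2}/(PB_{\Omega_n})$ and $L(\lambda) := \eta A_{\Omega_n}\sigma_{b}^{2}(R_N - R_{\Omega_n}\lambda) + \sigma^{2}(n\sigma_{b}^{2}+R_{\Omega_n}\lambda)$, one obtains $D(\lambda)=\mu N(\lambda)+L(\lambda)$ with $L$ affine in $\lambda$ and strictly positive on $[0,1]$. Consequently $J_\lambda = \mu^{-1}\bigl(1 - L(\lambda)/D(\lambda)\bigr)$, so the claim reduces to showing that $L/D$ is convex on $[0,1]$. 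Since $D$ is a concave quadratic (leading coefficient $-\eta A_{\Omega_n}\sigma_{a}^{2}R_{\Omega_n}<0$) that is strictly positive on $[0,1]$, its two real roots satisfy $r_{2}<0<1<r_{1}$, so $D(\lambda)=\eta A_{\Omega_n}\sigma_{a}^{2}R_{\Omega_n}(r_{1}-\lambda)(\lambda-r_{2})$, and the affine-over-quadratic ratio admits the partial-fraction expansion
\[
\frac{L(\lambda)}{D(\lambda)} = \frac{c_{1}}{r_{1}-\lambda} + \frac{c_{2}}{\lambda-r_{2}},\qquad c_{k} = \frac{L(r_{k})}{\eta A_{\Omega_n}\sigma_{a}^{2}R_{\Omega_n}(r_{1}-r_{2})}.
\]
On $[0,1]$ both $r_{1}-\lambda$ and $\lambda - r_{2}$ are positive linear functions, so each summand is a convex function of $\lambda$ \emph{provided} its residue $c_{k}\ge 0$.

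The main obstacle I anticipate is verifying $L(r_{1})\ge 0$ and $L(r_{2})\ge 0$, which is not evident because $L$ may be strictly decreasing in $\lambda$ (when $\sigma^{2}<\eta A_{\Omega_n}\sigma_{b}^{2}$) while $r_{1}>1$. Applying the quadratic formula to $r_{1},r_{2}$, squaring the resulting inequality, and simplifying, I expect the dominant terms to cancel and leave the manifestly true $R_{\Omega_n}\sigma^{2}(n\sigma_{b}^{2}+R_{N})\ge 0$. Once $c_{1},c_{2}\ge 0$ is in hand, $L/D$ is a sum of two convex functions and therefore convex, giving the desired concavity of $J_\lambda$.
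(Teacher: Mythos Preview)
Your approach is correct and genuinely different from the paper's. The paper proceeds by brute force: it computes $\partial^{2}J_{\lambda}/\partial\lambda^{2}$ explicitly as a long rational expression in $\lambda$, then groups the numerator into three blocks, each of which is manifestly non-negative once $R_{N}-R_{\Omega_{n}}\lambda\ge 0$ is invoked. Your route instead exploits the structural identity $D=\mu N+L$ to write $J_{\lambda}=\mu^{-1}\bigl(1-L/D\bigr)$ and reads off convexity of $L/D$ from a partial-fraction decomposition. This sidesteps the heavy second-derivative algebra and makes transparent \emph{why} the function is concave: the only obstruction is the sign of the two residues.

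On that ``main obstacle'' --- verifying $L(r_{k})\ge 0$ --- there is a cleaner argument than expanding the quadratic formula and squaring. Because $D(r_{k})=0$ and $D=\mu N+L$, you get $L(r_{k})=-\mu N(r_{k})$ immediately, so it suffices to show $N(r_{k})\le 0$. The roots of $N$ are $0$ and $\rho:=R_{N}/R_{\Omega_{n}}\ge 1$; since $N(\rho)=N(0)=0$, one has $D(\rho)=L(\rho)=\sigma^{2}(n\sigma_{b}^{2}+R_{N})>0$ and $D(0)=L(0)>0$, so both roots of $N$ lie strictly inside $(r_{2},r_{1})$. Hence $N(r_{1}),N(r_{2})<0$ and $L(r_{1}),L(r_{2})>0$, giving exactly the residue positivity you need. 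This in fact recovers the inequality $\sigma^{2}(n\sigma_{b}^{2}+R_{N})\ge 0$ you anticipated, without any squaring.
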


\begin{proof}
Please refer to Appendix A.
\end{proof}

Based on Lemma $1$, Problem $(P2)$ can be regarded as a convex
optimization problem. The Karush-Kuhn-Tucker (KKT) conditions are
employed for achieving the optimal solution, which can be readily
derived by using the following theorem.

\begin{theorem}
The optimal power splitting ratio $\lambda$ under condition of a given antenna set $\Omega_n$ to maximize SNR for the proposed harvest-and-forward strategy in the multiple-antenna relay channel can be deduced by
\begin{equation}\label{eq:12}
\begin{split}
&\lambda _{{\Omega _n}}^{opt} = \min \left\{ {\lambda _{{\Omega _n}}^ * ,1} \right\},\\
&\lambda _{{\Omega _n}}^ *  = \left\{ \begin{array}{l}
\; \quad \quad \quad \quad \quad \quad \quad \quad \quad \quad \frac{{{R_N}}}{{2{R_{{\Omega _n}}}}}\quad \quad \quad \quad \quad \quad \quad \quad \quad \quad \quad \quad ;\eta \sigma _b^2{A_{{\Omega _n}}} = {\sigma ^2}\\
\sqrt {\eta \sigma _b^2{A_{{\Omega _n}}}{R_N} + n{\sigma ^2}\sigma _b^2} \frac{{\sqrt {\eta \sigma _b^2{A_{{\Omega _n}}}{R_N} + n{\sigma ^2}\sigma _b^2}  - \sqrt {{\sigma ^2}{R_N} + n{\sigma ^2}\sigma _b^2} }}{{{R_{{\Omega _n}}}\left( {\eta \sigma _b^2{A_{{\Omega _n}}} - {\sigma ^2}} \right)}}\quad \quad ;\eta \sigma _b^2{A_{{\Omega _n}}} \ne {\sigma ^2}
\end{array} \right .
\end{split}
\end{equation}
\end{theorem}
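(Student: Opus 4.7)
The plan is to reduce the problem to a one-variable concave optimization and apply the KKT conditions. By Lemma~1, $J_\lambda$ is concave on $[0,1]$ for any fixed antenna set $\Omega_n$, so Problem~(P2) is a convex program with box constraints, and the KKT conditions are both necessary and sufficient. Since $J_\lambda\big|_{\lambda=0}=0$ while $J_\lambda>0$ for small positive $\lambda$, the boundary $\lambda=0$ is never active; hence the optimum is either the unique interior stationary point $\lambda^\ast$, or the upper boundary $\lambda=1$ whenever $\lambda^\ast>1$. This directly justifies the outer expression $\lambda^{opt}_{\Omega_n}=\min\{\lambda^\ast_{\Omega_n},1\}$, and reduces the theorem to solving $dJ_\lambda/d\lambda=0$.

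Next I would carry out the differentiation. Writing $J_\lambda=N(\lambda)/D(\lambda)$, both $N$ and $D$ are quadratic in $\lambda$:
\begin{equation*}
N(\lambda)=\eta P B_{\Omega_n}\lambda(R_N-R_{\Omega_n}\lambda),
\end{equation*}
\begin{equation*}
D(\lambda)=\eta A_{\Omega_n}(\lambda\sigma_a^2+\sigma_b^2)(R_N-R_{\Omega_n}\lambda)+\sigma^2(n\sigma_b^2+R_{\Omega_n}\lambda).
\end{equation*}
Setting $N'(\lambda)D(\lambda)-N(\lambda)D'(\lambda)=0$ and collecting powers of $\lambda$, the common factor $\eta P B_{\Omega_n}$ drops out and, after routine cancellation, the cubic terms vanish. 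What remains is the single quadratic
\begin{equation*}
R_{\Omega_n}^{2}\bigl(\eta\sigma_b^2 A_{\Omega_n}-\sigma^2\bigr)\lambda^2
-2R_{\Omega_n}\,\gamma\,\lambda+\gamma R_N=0,
\qquad \gamma:=\eta\sigma_b^2 A_{\Omega_n}R_N+n\sigma^2\sigma_b^2.
\end{equation*}

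With this quadratic in hand, I would split into cases exactly as in the theorem. When $\eta\sigma_b^2 A_{\Omega_n}=\sigma^2$, the $\lambda^2$ coefficient vanishes and the remaining linear equation yields $\lambda^\ast=R_N/(2R_{\Omega_n})$, matching the first branch. Otherwise, I would apply the quadratic formula and exploit the key algebraic identity
\begin{equation*}
\gamma-\bigl(\eta\sigma_b^2 A_{\Omega_n}-\sigma^2\bigr)R_N=\sigma^2(R_N+n\sigma_b^2),
\end{equation*}
so that the discriminant simplifies to $\gamma\,\sigma^2(R_N+n\sigma_b^2)$. Factoring out $\sqrt{\gamma}$ from numerator gives the stated closed form $\sqrt{\gamma}\,(\sqrt{\gamma}-\sqrt{\sigma^2 R_N+n\sigma^2\sigma_b^2})/\bigl[R_{\Omega_n}(\eta\sigma_b^2 A_{\Omega_n}-\sigma^2)\bigr]$.

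The last step is selecting the correct root. There are two roots of the quadratic (with $\pm$), and I would discard the $+$ branch by arguing that concavity of $J_\lambda$ on $[0,1]$ forces $J'_\lambda=0$ to have at most one admissible solution: the $+$ root can be shown to fall outside the feasible region (or to make the denominator $D(\lambda)$ vanish/change sign) in both sign regimes of $\eta\sigma_b^2 A_{\Omega_n}-\sigma^2$, since the sign of the denominator flips accordingly and the $+$ branch always produces a value with inconsistent sign. A cleaner alternative is a continuity argument: the $-$ branch continuously tends to $R_N/(2R_{\Omega_n})$ as $\eta\sigma_b^2 A_{\Omega_n}\to\sigma^2$ (by rationalization), which agrees with the degenerate case and identifies it as the physically meaningful root. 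The main obstacle I anticipate is not the calculus but this algebraic bookkeeping: carrying out the cancellation of cubic terms cleanly and then recognizing the hidden factorization of the discriminant that enables the square-root-of-square-root form in the statement.
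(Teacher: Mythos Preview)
Your proposal is correct and follows essentially the same route as the paper: invoke Lemma~1 for concavity, apply KKT with the Lagrangian $L(\lambda,\mu)=J_\lambda-\mu(\lambda-1)$, derive the same quadratic stationarity condition (the paper's equation for $\partial J_\lambda/\partial\lambda$ is exactly your quadratic with $\gamma=\eta\sigma_b^2 A_{\Omega_n}R_N+n\sigma^2\sigma_b^2$), split on whether $\eta\sigma_b^2 A_{\Omega_n}=\sigma^2$, and then select the admissible root. The only minor difference is in the root-selection step: the paper factors the derivative as $C_{\Omega_n}R_{\Omega_n}^2(\eta\sigma_b^2 A_{\Omega_n}-\sigma^2)(\lambda-\lambda_2)(\lambda-\lambda_3)$ and reads off the sign directly to discard $\lambda_2$, whereas you propose either a feasibility check or a continuity argument as $\eta\sigma_b^2 A_{\Omega_n}\to\sigma^2$; both are valid, and the paper's factorization is perhaps the cleanest way to execute your option~(a).
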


\begin{proof}
Please refer to Appendix B.
\end{proof}

\vspace*{-0.5em}
\subsection{Exhaustive Searching Based Antenna Selection}

Similar to the derivation of (8), for each feasible antenna set $\Omega _n$, substituting (4) and (9) into (6), the SNR expression can be reformulated as
\begin{equation}\label{eq:21}
J_{\Omega _n} = \frac{{\eta P{B_{{\Omega _n}}}\left( {{R_N} - \lambda _{{\Omega _n}}^{opt}{R_{{\Omega _n}}}} \right)\lambda _{{\Omega _n}}^{opt}}}{{ {\eta {A_{{\Omega _n}}}} {\left( {{R_N} - \lambda _{{\Omega _n}}^{opt}{R_{{\Omega _n}}}} \right)\left( {\lambda _{{\Omega _n}}^{opt}\sigma _a^2 + \sigma _b^2} \right) + {\sigma ^2}\left( {n\sigma _b^2 + \lambda _{{\Omega _n}}^{opt}{R_{{\Omega _n}}}} \right)}}}.
\end{equation}

Considering the fact that the number of $\Omega_n$'s is finite, the maximization of \eqref{eq:21} can be solved through searching the
optimal one among all feasible antenna sets with the determined PS
ratio.

\vspace*{-0.5em}
\subsection{Greedy Antenna Selection}

The exhaustive searching process for the optimal antenna
set $\Omega^{opt}$ can be categorized as an Non-deterministic
Polynomial-time (NP)-hard problem because the number of feasible
antenna sets is ${2^N}-1$. A greedy antenna selection scheme is presented to approach the optimal solution, which is of a complexity of $O(N^2)$, and thus easier to handle. We use $\Phi_n=\{i_n\in[1,N]| i_n\notin \Omega^{opt}\}$ to denote an antenna set with energy harvesting solely, and $i_n$'s are antennas therein. We use $\Omega_n$ to denote a feasible antenna set with power splitting, which also serves in signal forwarding. Note that $\vert\Omega_n\vert=n$ and thus $\vert\Phi_n\vert=N-n$, where
$\vert \cdots \vert$ denotes the cardinality of a set. The key idea is to determine whether there are received SNR gains when an antenna $i_n$ is switched from energy harvesting set~(i.e., $\Phi_n$) to signal forwarding set~(i.e., $\Omega_n$). Thus, ${\Omega_n}= {\Omega^{opt}} \cup [i_{n}]$, and $\Phi_n=\Phi_{n-1} / [i_n]$, where $/ $ denotes the subtraction of sets. As described in Algorithm $1$, there is no forwarding antennas
initially~(i.e., $\Omega^{opt}=\emptyset$). The optimal power
splitting ratio~(i.e., $\lambda _{{\Omega_n}}^{opt}$) and the
achieved SNR~(i.e., $J_{\Omega _n}$) for each
feasible antenna set $\Omega _n$ are calculated. Then the largest SNR with
$n$ forwarding antennas is determined by
\begin{equation}\label{eq:13}
{J'}\left( n \right) = \mathop {\max }\limits_{{\Omega _n}} \quad J_{\Omega _n}.
\end{equation}

The optimal $\Omega _n$ derived from \eqref{eq:13} is settled for signal forwarding, which updates $\Omega^{opt}$. This algorithm ends until the above procedure cannot increase the SNR performance or all antennas are settled for signal forwarding. Note that no iteration is included in our algorithm, since the greedy AS procedure is given one shot. What's more, the algorithm can converge, since it's a one-time non-decreasing SNR based searching, and the number of feasible solutions is finite. TABLE $1$ shows the complexity comparison between the proposed strategy and the exhaustive searching method.
\begin{table}[!hbp]
\centering
\caption{Complexity Comparison}
\begin{tabular}{c|c|c}
\hline
 & Proposed strategy & Exhaustive searching method \\
\hline
Time complexity & $O(N^2)$ & $2^N-1$ \\
\hline
\end{tabular}
\end{table}

\begin{algorithm}
\caption{Joint optimization of the AS and PS} \label{alg:Framwork}
\begin{algorithmic}[1]
%\nonumber
%\REQUIRE ~~\\%Initialization:
\STATE Set the stage as $n=1$, and the optimal antenna set for signal forwarding as $\Omega^{opt} = \emptyset$.
\STATE For all~$i_n \in \Phi_n $\\
~~Set a feasible antenna set ${\Omega_n}= {\Omega^{opt}} \cup [i_{n}]$;\\
~~Calculate $\lambda _{{\Omega_n}}^{opt}$ according to \eqref{eq:12}, and $J_{\Omega _n}$ according to \eqref{eq:21}.
\STATE Derive $J'\left( n \right)$ and the optimal ${\Omega_n}$ according to \eqref{eq:13}.
\STATE If $J' \left( n \right) \ge J' \left( {n-1} \right)$, mark the optimal ${\Omega_n}$ as ${\Omega^{opt}}$, and set $n=n+1$.
\STATE Stop if $J' \left( n \right) < J' \left( n-1\right)$, or $n=N+1$, otherwise go to step $2$.
\end{algorithmic}
\end{algorithm}

\vspace*{-0.5em}
\section{Numerical Results}

The path loss model for the energy harvesting relay channel is
denoted by ${{\vert\rho_i\vert}^2}d_i^{-2}$, where $i=1,2,3$,
$d_1(d_2)$ is the distance between $S(D)$ and $R$, and $d_3$ is the
distance between $S$ and $D$. Besides, ${\vert\rho_i\vert}$ denotes
the short-term channel fading, and is assumed to be Rayleigh
distributed. ${{\vert\rho_i\vert}^2}$ follows the exponential
distribution with unit mean. We set the energy conversion efficiency
as $\eta=0.2$, and the noises as $\sigma^2=-50$ dBm,
$\sigma_a^2=\sigma_b^2=\sigma^2/2$. In addition, the bandwidth is
$W=1$ MHz, and the number of antennas is $N=10$.

Achievable rate performances for different transmission strategies are evaluated in Fig. 2. The conventional amplify-and-forward (AF) relaying and the direct transmission are traditional information transmissions without energy harvesting processes. The consumed power of the system is assumed as $P = 10 dBW$. Thus, the transmit power is $P$ at the source node for the proposed strategy and the direct transmission, and is $P/2$ at both the source node and the relay node for the conventional AF relaying. The distance between the source and the relay is normalized as $d_1 = 1$ meter. The results imply that the proposal enjoys a better achievable rate than the direct
transmission and the conventional AF relaying.

\begin{figure}[!htb]
\centering\vspace*{-1em}
\includegraphics[width= 4.5in]{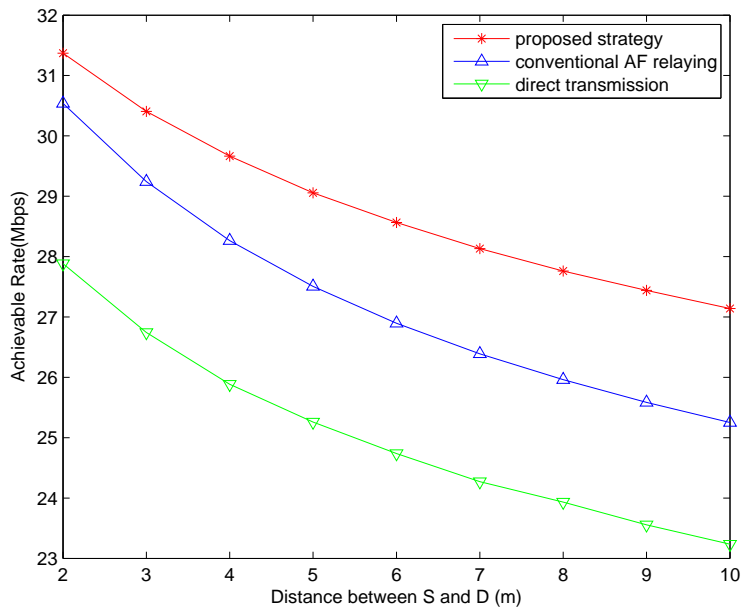}
\caption{Achievable rate versus distance between $S$ and $D$, $d_1=1,N=10,P=10dBW,\sigma^2=-50dBm$.}\label{fig3}\vspace*{-1em}
\end{figure}

Fig. \ref{fig4} is given to demonstrate the efficiency of the proposed joint optimization, where $d_1=5$ meters, $d_2=10$ meters, $d_3=15$
meters. The exhaustive searching method is to find the optimal joint AS and PS solution numerically with the help of Theorem $1$. The energy harvesting strategy with pure PS is a special case of the proposed strategy, where all antennas are selected for signal forwarding~(i.e., $n=N$). The pure AS strategy is another special case of the proposed strategy, where the sub-signal power at the selected transmitting antennas is used solely for information processing~(i.e., $\lambda=1$). It's obvious that the performance of the proposed strategy approaches to be optimal, which indicates that the
proposal is accurate and efficient. What's more, the proposed strategy outperforms the pure AS strategy or the pure PS strategy in the achievable rate. It reveals that both AS and PS techniques are indispensable to optimize the achievable rate performance of the proposed harvest-and-forward strategy.

\begin{figure}[!htb]
\centering\vspace*{-1em}
\includegraphics[width= 4.5in]{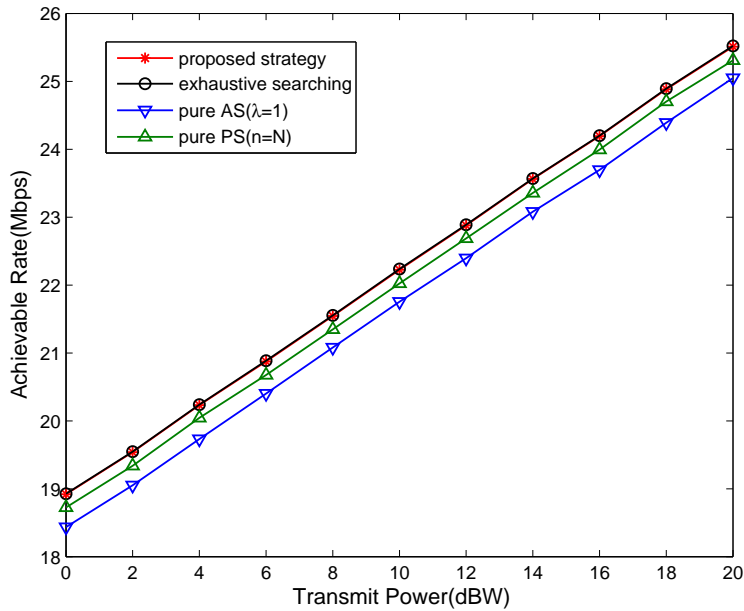}
\caption{Achievable rate versus transmit power,
$d_1=5,d_2=10,d_3=15$.}\label{fig4}\vspace*{-1em}
\end{figure}

\vspace*{-0.25em}
\section{CONCLUSIONS}
\vspace*{-0.25em}

A harvest-and-forward strategy in relay channels with multiple
antenna configurations has been proposed in this paper, where the
optimization problem in terms of the achievable rate has been solved
through jointly designing antenna selection and power splitting
techniques. Simulation results have indicated that the proposed
strategy and the corresponding solution have significant achievable
rate performance gains. The optimization of energy efficiency
performance when jointly considering power splitting and antenna
selection would be analyzed in the future.

\begin{appendices}
\vspace*{-0.5em}
\section{PROOF OF LEMMA 1}
\vspace*{-0.25em}

From \eqref{eq:8}, the second-order derivative of $J_{\lambda}$ is derived as
\begin{equation}\label{eq:15}
\begin{array}{l}
\frac{{{\partial ^2}J_{\lambda}}}{{\partial {\lambda ^2}}} = \frac{{ - 2{C_{{\Omega _n}}}}}{{\eta {A_{{\Omega _n}}}\left( {{R_N} - {R_{{\Omega _n}}}\lambda } \right)\left( {\sigma _a^2\lambda  + \sigma _b^2} \right) + {\sigma ^2}\left( {{R_{{\Omega _n}}}\lambda  + n\sigma _b^2} \right)}}\\
*\left\{ {\left( {{\sigma ^2}{R_{{\Omega _n}}}{R_N} + n{\sigma ^2}\sigma _b^2{R_{{\Omega _n}}}} \right)\left( {\eta \sigma _b^2{A_{{\Omega _n}}}{R_N} + n{\sigma ^2}\sigma _b^2} \right)} \right. + \eta \sigma _a^2{A_{{\Omega _n}}}\left[ {\eta \sigma _b^2{A_{{\Omega _n}}}} \right.\\
*{\left( {{R_N} - {R_{{\Omega _n}}}\lambda } \right)^3}\left. { + {\sigma ^2}{{\left( {{R_{{\Omega _n}}}\lambda } \right)}^3}} \right]\left. { + \eta \sigma _a^2{A_{{\Omega _n}}}n{\sigma ^2}\sigma _b^2\left[ {{{\left( {{R_N} - \frac{3}{2}{R_{{\Omega _n}}}\lambda } \right)}^2} + \frac{3}{4}{{\left( {{R_{{\Omega _n}}}\lambda } \right)}^2}} \right]} \right\},
\end{array}
\end{equation}
where ${C_{{\Omega _n}}} = {{\eta P{B_{{\Omega _n}}}}
\mathord{\left/
 {\vphantom {{\eta P{B_{{\Omega _n}}}} {\left[ {\left( {\eta \sigma _a^2{A_{{\Omega _n}}}{R_{{\Omega _n}}}} \right){\lambda ^2} - \left( {\eta {A_{{\Omega _n}}}\sigma _a^2{R_N}} \right.} \right.}}} \right.
 \kern-\nulldelimiterspace} {\left[ {\left( {\eta \sigma _a^2{A_{{\Omega _n}}}{R_{{\Omega _n}}}} \right){\lambda ^2} - \left( {\eta {A_{{\Omega _n}}}\sigma _a^2{R_N}} \right.}  \left. { - \eta {A_{{\Omega _n}}}\sigma _b^2{R_{{\Omega _n}}} + {\sigma ^2}{R_{{\Omega _n}}}} \right)\lambda \right. - }}\\
{\left.{ \sigma _b^2\left( {\eta {A_{{\Omega _n}}}{R_N} + n{\sigma ^2}} \right)} \right]^2} \ge 0$.

Since $\lambda \le 1$, and ${{R_{{\Omega _n}}} \le {R_N}}$, we have
${{R_N} - {R_{{\Omega _n}}}\lambda} \ge 0$. Therefore,
$\frac{{{\partial ^2}J_{\lambda}}}{{\partial {\lambda ^2}}} \le 0$, and $J_{\lambda}$ is
a concave function of $\lambda$. This completes the proof of Lemma
1.

\vspace*{-0.5em}
\section{PROOF OF THEOREM 1}
\vspace*{-0.5em}

Since (P2) is concave, and the feasible set for $\lambda$ is convex,
the KKT conditions are sufficient for achieving the optimal solution with the Lagrange function
\begin{equation}\label{eq:9}
L(\lambda, \mu) = J_{\lambda} - {\mu}\left( {\lambda  - 1} \right),
\end{equation}
where ${\mu} \ge 0$ is the Lagrange multiplier associated with the
constraint $\lambda-1 \le 0$. The KKT conditions are stated by
\begin{equation}\label{eq:10}
\begin{array}{l}
\frac{{\partial L(\lambda, \mu)}}{{\partial \lambda }} = \frac{{\partial J_{\lambda}}}{{\partial \lambda }} - {\mu} = 0,\\
{\mu}\left( {\lambda  - 1} \right) = 0,\\
\lambda  - 1 \le 0,
\end{array}
\end{equation}
where $\frac{{\partial J_{\lambda}}}{{\partial \lambda }}$ is the first-order
derivative of $J_{\lambda}$, and is given by
\begin{equation}\label{eq:11}
\begin{array}{l}
\frac{{\partial J_{\lambda}}}{{\partial \lambda }} = {C_{{\Omega _n}}}\left[ {{{\left( {{R_{{\Omega _n}}}} \right)}^2}\left( {\eta \sigma _b^2{A_{{\Omega _n}}} - {\sigma ^2}} \right){\lambda ^2} - 2{R_{{\Omega _n}}}\left( {\eta \sigma _b^2{A_{{\Omega _n}}}{R_N} + n{\sigma ^2}\sigma _b^2} \right)\lambda } \right.\\
\quad \quad \quad \quad \quad \quad \quad \quad \quad \quad \quad \quad \quad \quad \quad \quad \quad \quad \quad \quad \left. { + {R_N}\left( {\eta \sigma _b^2{A_{{\Omega _n}}}{R_N} + n{\sigma ^2}\sigma _b^2} \right)} \right].
\end{array}
\end{equation}

There are two groups of solutions for the KKT conditions \eqref{eq:10}. First, $\lambda_1 = 1$, and ${\mu} = {\left. {\frac{{\partial J_{\lambda }}}{{\partial \lambda }}} \right|_{\lambda  = 1}}$. Second, $0 \le \lambda < 1$, and ${\mu} =0$. When ${\mu} =0$ and $(\eta \sigma _b^2{A_{{\Omega _n}}} - {\sigma^2}) \ne 0$, it's derived that
\begin{equation}\label{eq:16}
\begin{array}{l}
\lambda_2=\sqrt {\eta \sigma _b^2{A_{{\Omega _n}}}{R_N} + n{\sigma ^2}\sigma _b^2}\frac{{\sqrt {\eta \sigma _b^2{A_{{\Omega _n}}}{R_N} + n{\sigma ^2}\sigma _b^2}  + \sqrt {{\sigma ^2}{R_N} + n{\sigma ^2}\sigma _b^2} }}{{{R_{{\Omega _n}}}\left( {\eta \sigma _b^2{A_{{\Omega _n}}} - {\sigma ^2}} \right)}},\\
\lambda_3=\sqrt {\eta \sigma _b^2{A_{{\Omega _n}}}{R_N} + n{\sigma ^2}\sigma _b^2}\frac{{\sqrt {\eta \sigma _b^2{A_{{\Omega _n}}}{R_N} + n{\sigma ^2}\sigma _b^2}  - \sqrt {{\sigma ^2}{R_N} + n{\sigma ^2}\sigma _b^2} }}{{{R_{{\Omega _n}}}\left( {\eta \sigma _b^2{A_{{\Omega _n}}} - {\sigma ^2}} \right)}},
\end{array}
\end{equation}
whose values depend on $(\eta \sigma _b^2{A_{{\Omega_n}}} - {\sigma ^2})$. It's clear that $\lambda _2 < 0$ or ${\lambda _2} > 1$, thus $\lambda_2$ is not feasible. When ${\mu} =0$ and
$(\eta \sigma _b^2{A_{{\Omega _n}}} - {\sigma ^2}) = 0$, it's derived from \eqref{eq:10} that $\lambda_4= R_N/{2R_{\Omega_n}}$. Next, the optimal power splitting ratio $\lambda^{opt}_{\Omega_n}$ is determined through monotonicity analysis of the optimization object $J_{\lambda}$. According to \eqref{eq:11}, when $(\eta \sigma _b^2{A_{{\Omega _n}}} - {\sigma ^2}) \ne0$, the expression can be factorize as
\begin{equation}\label{eq:17}
\frac{{\partial J_{\lambda}}}{{\partial \lambda }} = {C_{{\Omega _n}}}  {\left( {{R_{{\Omega _n}}}} \right)^2} \left( {\eta \sigma _b^2{A_{{\Omega _n}}} - {\sigma ^2}} \right)  (\lambda - \lambda_2)  (\lambda - \lambda_3).
\end{equation}

Since $0 \le {C_{{\Omega_n}}} {\left( {{R_{{\Omega _n}}}}
\right)^2}$ and $\left( {\eta\sigma _b^2{A_{{\Omega _n}}} - {\sigma
^2}} \right)  (\lambda -\lambda_2) \le 0$, when $\lambda_3 < 1$,
$\frac{{\partial J_{\lambda}}}{{\partial \lambda }}$ is non-negative among
$[0,\lambda_3]$, and is non-positive among $(\lambda_3,1]$, i.e.,
$\lambda^{opt}_{\Omega_n}= \lambda_3$. In the case of $\lambda_3 \ge
1$, $\frac{{\partial J_{\lambda}}}{{\partial \lambda }}$ is non-negative among
$[0,1]$, i.e., $\lambda^{opt}_{\Omega_n} = 1$. In conclusion,
$\lambda^{opt}_{\Omega_n} = min \{1, \lambda_3 \}$. In a similar way, it can be derived that $\lambda^{opt}_{\Omega_n} = min \{1, \lambda_4
\}$, when $(\eta
\sigma _b^2{A_{{\Omega _n}}} - {\sigma ^2}) = 0$. Consequently, the optimal $\lambda$ is derived as in
\eqref{eq:12}. This proves Theorem 1.
\end{appendices}

\end{document}